\documentclass[11pt]{article}
\usepackage[T1]{fontenc}
\usepackage{lmodern}
\usepackage{amsmath,amssymb,amsthm}
\usepackage{fullpage}
\usepackage{microtype}
\usepackage{aliascnt}
\usepackage{booktabs}
\usepackage{xcolor}
\usepackage{tcolorbox}
\definecolor{linkblue}{RGB}{28,67,112}
\usepackage[colorlinks=true,allcolors=linkblue]{hyperref}
\usepackage[nameinlink,capitalize,noabbrev]{cleveref}

\newaliascnt{lemma}{theorem}
\newtheorem{lemma}[lemma]{Lemma}
\aliascntresetthe{lemma}
\crefname{lemma}{lemma}{lemmas}
\Crefname{lemma}{Lemma}{Lemmas}
\newaliascnt{corollary}{theorem}

\aliascntresetthe{corollary}
\crefname{corollary}{corollary}{corollaries}
\Crefname{corollary}{Corollary}{Corollaries}
\newtheorem*{claim}{Claim}
\theoremstyle{definition}
\newaliascnt{definition}{theorem}

\aliascntresetthe{definition}
\crefname{definition}{definition}{definitions}
\Crefname{definition}{Definition}{Definitions}
\usepackage{thm-restate}
\numberwithin{equation}{section}
\newcommand{\Ostar}[1]{O^*(#1)}
\newcommand{\sca}{\textnormal{\textsc{SCA}}}
\newcounter{problem}[section]
\renewcommand{\theproblem}{\thesection.\arabic{problem}}

\crefname{problem}{problem}{problems}
\Crefname{problem}{Problem}{Problems}
\newenvironment{boxproblem}[2]{%
  \begin{center}
    \begin{tcolorbox}[
        width=0.96\linewidth,
        colback=white,
        colframe=black!35,
        boxrule=0.45pt,
        arc=4pt,
        outer arc=4pt,
        boxsep=0pt,
        left=7pt,
        right=7pt,
        top=6pt,
        bottom=6pt]
      \refstepcounter{problem}\label{#2}%
      {\centering\normalfont\normalsize\scshape Problem~\theproblem: #1\par}
      \vspace{3pt}
      {\color{black!35}\hrule height 0.45pt}
      \vspace{5pt}
      \normalfont
}{%
    \end{tcolorbox}
  \end{center}
}

\title{Single-Exponential Algorithms and a Polynomial Kernel\\
  for Strong Connectivity Augmentation}
\author{Tomohiro Koana\thanks{Graduate School of Information Science and Technology, The University of Tokyo, Japan.\newline Email: \texttt{tomohiro.koana@gmail.com}.}
\and Soh Kumabe\thanks{CyberAgent, Inc., Tokyo, Japan. Email: \texttt{kumabe\_soh@cyberagent.co.jp}.}}
\date{}
\hypersetup{pdftitle={Single-Exponential Algorithms and a Polynomial Kernel for Strong Connectivity Augmentation},pdfauthor={Tomohiro Koana and Soh Kumabe}}

\begin{document}
\maketitle
\begin{abstract}
\textsc{Strong Connectivity Augmentation} (\sca) asks whether a directed acyclic graph can be made strongly connected by adding at most $k$ prescribed links whose total weight is within a given budget.
Klinkby, Misra, and Saurabh (SODA 2021) gave an $\Ostar{2^{O(k\log k)}}$-time algorithm and asked whether the problem admits a single-exponential parameterized algorithm and a polynomial kernel.
We answer both questions affirmatively: \sca\ can be solved in $\Ostar{9^k}$ time and admits a polynomial kernel with $O(k^4)$ vertices and $O(k^{16})$ bits.
For unweighted \sca, we obtain $\Ostar{4^k}$ time and a kernel with $O(k^3)$ vertices.
Our algorithms are based on a particularly simple reduction to
 \textsc{Strongly Connected Spanning Subgraph} with two edge costs.
\end{abstract}

\section{Introduction}
\label{sec:introduction}

\textsc{Strong Connectivity Augmentation} (\sca) asks how to add directed edges to a digraph $G$ so that every vertex can reach every other vertex.
We write $uv$ for an edge directed from $u$ to $v$.
We call the edges in $G$ \emph{base edges} and the allowed additional edges \emph{links}.
Formally, we study the following problem on a directed acyclic graph (DAG).\footnote{A general digraph can be reduced to a DAG by contracting each strongly connected component.}

\begin{boxproblem}{\textsc{Strong Connectivity Augmentation} (\sca)}{prob:sca}
\textbf{Input:} A DAG $G$, a link set $L\subseteq V(G)\times V(G)$, weights $w\colon L\to\mathbb Z_{\ge0}$, and integers $k,t\ge0$.

\smallskip
\textbf{Task:} Decide whether there exists $F\subseteq L$ with $|F|\le k$ and $w(F):=\sum_{e\in F}w(e)\le t$ such that $G+F:=(V(G),E(G)\cup F)$ is strongly connected.

\smallskip
\textbf{Parameter:} $k$.
\end{boxproblem}

Even an edge-free input captures \textsc{Directed Hamiltonian Cycle}: for $n\ge3$, a strong digraph on $n$ vertices with at most $n$ edges must be a directed Hamiltonian cycle.
Thus the number $k$ of selected links provides a parameter for a problem that is already difficult on very simple base digraphs.

Connectivity augmentation has recently received considerable attention in parameterized complexity.
For undirected graphs, Carmesin and Ramanujan~\cite{carmesin2026} gave an $\Ostar{k^{O(k)}}$-time algorithm for prescribed-link augmentation to target vertex-connectivity $1\le\lambda\le4$.\footnote{The notation $\Ostar{\cdot}$ suppresses factors polynomial in the input bit length, with degree independent of the parameters in the argument.}
Here $k$ is the number of added links.
Korhonen and Thorup~\cite{korhonen2026} obtained $\Ostar{(k+\lambda)^{O(k)}}$ time for arbitrary target vertex-connectivity $\lambda$ and $\Ostar{k^{O(k)}}$ time for arbitrary target edge-connectivity.
For target vertex-connectivity two, Koana and Kumabe~\cite{koana2026} gave an $\Ostar{36^k}$-time algorithm for the unweighted case.
Their algorithm also handles positive integer link costs with pseudo-polynomial dependence on the maximum cost.

For \sca, Guo and Uhlmann~\cite{guo2010} asked in 2010 whether the problem is fixed-parameter tractable in $k$.
The question remained open for more than a decade, until Klinkby, Misra, and Saurabh~\cite{klinkby2021} gave an $\Ostar{2^{O(k\log k)}}$-time algorithm in 2021.
They asked whether single-exponential parameterized algorithms and polynomial kernels exist, for both the unweighted and weighted cases.

\subsection{Our contributions}
\label{subsec:results}

We resolve both open questions of Klinkby et al.~\cite{klinkby2021} affirmatively.
Our algorithms are based on a simple reduction to \textsc{Strongly Connected Spanning Subgraph} (\textsc{SCSS}) with two costs.
For a directed multigraph $H$, write $V(H)$ and $E(H)$ for its vertex and edge sets.

\begin{boxproblem}{\textsc{Strongly Connected Spanning Subgraph} (\textsc{SCSS}) with two costs}{prob:pair-scss}
\textbf{Input:} A directed multigraph $J$, costs $\ell,w\colon E(J)\to\mathbb Z_{\ge0}$, and integers $k,t\ge0$.

\smallskip
\textbf{Task:} Decide whether there exists $B\subseteq E(J)$ with $\ell(B):=\sum_{e\in B}\ell(e)\le k$ and $w(B):=\sum_{e\in B}w(e)\le t$ such that $(V(J),B)$ is strongly connected.
\end{boxproblem}

\begin{restatable}{theorem}{reductiontheorem}
\label{thm:reduction}
An instance of \sca\ can be reduced in polynomial time to an equivalent instance of \textnormal{\textsc{SCSS}} with two costs on at most $2k$ vertices.
\end{restatable}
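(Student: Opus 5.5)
We reduce to the sources and sinks of the DAG $G$. Let $S$ be the set of sources (in-degree zero) and $T$ the set of sinks (out-degree zero); isolated vertices lie in both. A digraph $G+F$ is strongly connected if and only if every vertex reaches some sink and is reached from some source (automatic in a DAG), and the sources and sinks are mutually reachable. So strong connectivity of $G+F$ depends only on how the links combine with reachability among $S\cup T$.

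\emph{Step 1: trivial rejection.} Every source needs an incoming link and every sink an outgoing link. Each link serves at most one sink as tail and one source as head. Hence $|S|\le k$ and $|T|\le k$ are necessary, and otherwise we output a trivial no-instance. So $|S\cup T|\le 2k$.

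\emph{Step 2: building $J$.} Let $V(J)=S\cup T$. For each link $uv\in L$, and for each sink $y$ reachable from $v$ and each source $x$ that can reach $u$ in $G$, add an edge $e$ from the sink side to the source side, namely the edge $y\to x$. Think of it as ``$y$ reaches $x$ via this link''. Set $\ell(e)=1$ and $w(e)=w(uv)$. Multiple edges between the same pair are allowed, and we keep only a cheapest one per cost pair, which suffices for polynomial size. We also encode base reachability: for each source $x$ and sink $y$ with $x$ reaching $y$ in $G$, add an edge $x\to y$ with $\ell=w=0$. Keep the parameters $k,t$.

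\emph{Step 3: equivalence.} \emph{Forward direction.} Take a solution $F$. For each link $uv\in F$, consider the walk in $G+F$ from $v$ down to some sink $y$ and from some source $x$ up to $u$. Mutual reachability among $S\cup T$ in $G+F$ decomposes every path into base segments and links. Each link $uv$ on such a path is entered from a base path that starts at a source-reachable region and exits to a sink. The delicate point is that the path segment between two consecutive links, say from the head $v_1$ of one link to the tail $u_2$ of the next, is a base path $v_1\leadsto u_2$. It need not pass through any source or sink. To handle this, map the link $u_2v_2$ instead to the $J$-edge from a sink reachable from $v_2$ to a source reaching $u_2$. Then argue that the composite connectivity is still realized through zero-cost $x\to y$ edges, by extending $v_1\leadsto u_2$ to a source-to-sink path through $u_2$.

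This requires choosing, for each link, a single $J$-edge consistently. That is the main obstacle: a naive choice per link may break connectivity, since a link's tail region can be entered from several sources. What I would try first is to pick, for each link $uv$, an arbitrary source $x_u$ reaching $u$ and an arbitrary sink $y_v$ reachable from $v$. I would then show strong connectivity of the image by proving, for every pair of terminals, that reachability in $G+F$ implies reachability in $J[B]$. The argument goes by induction on the number of links used. The induction works because any source reaching $u$ together with $x_u$ both lie in $J[B]$'s strong component, once we know the whole image is connected. If this circularity fails, fall back on expanding $B$ to include all $J$-edges of each link, which is harmless for the cost bound only if they share one unit of $\ell$. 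In that case I would instead change the vertex set to represent each link's tail region by its set of sources, still with at most $2k$ terminals.

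\emph{Backward direction.} Take $B$ and map each positive-cost edge back to its link, giving $|F|\le\ell(B)\le k$ and $w(F)\le t$. Every $J$-edge corresponds to a genuine walk in $G+F$, so strong connectivity of $J[B]$ gives mutual reachability of all terminals. Every other vertex lies on a source-to-sink path, so $G+F$ is strongly connected.
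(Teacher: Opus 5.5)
\textbf{Gap: the construction is unsound.} Your reduction fails already in the direction you treat as easy. For a link $uv$, a source $x$ reaching $u$, and a sink $y$ reachable from $v$, the graph $G+\{uv\}$ contains an $x$-to-$y$ walk, not a $y$-to-$x$ walk. So the claim that every $J$-edge corresponds to a genuine walk in $G+F$ is false.

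Concretely, let $G$ be the path $x\to p\to q\to y$ and $L=\{yp,qx\}$. Both links produce the edge $y\to x$ with first cost $1$ in your $J$. For $yp$, $x$ reaches the tail $y$ and $y$ is reachable from the head $p$; for $qx$, $x$ reaches $q$ and $y$ is reachable from $x$. Together with the zero-cost edge $x\to y$, this makes $J$ strongly connected with $\ell(B)=1$. But for $k=1$ the \sca\ instance is a no-instance: adding $yp$ leaves $x$ without an entering edge, and adding $qx$ leaves $y$ without a leaving edge.

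Reorienting your edges does not fix this. In $G$ a sink reaches only itself and a source is reached only from itself. Hence a single link yields a sink-to-source connection only if it goes directly from a sink to a source. In the example, the only $y$-to-$x$ route in $G+L$ is $y\to p\to q\to x$, which uses both links and a nonterminal segment. Correctly oriented one-link edges therefore give $J$ no edge entering $x$ from another terminal, although the instance with $k=2$ is a yes-instance. The ``delicate point'' you noticed is thus fatal for any one-link encoding. Your forward direction, left with a circular induction and with fallbacks that additive costs cannot express, cannot be completed either.

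\textbf{Missing idea.} Let edges of $J$ stand for whole multi-link terminal paths. For all distinct terminals $u,v$ and $0\le h\le k$, add an edge $uv$ with costs $(h,d_h(u,v))$. Here $d_h(u,v)$ is the minimum weight of a $u$-to-$v$ path in $G+L$ using at most $h$ links, computable by shortest paths in a layered DAG. The backward direction is then immediate: expand each selected edge into its path, and shared links only decrease $|F|$ and $w(F)$.

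The real difficulty moves to the forward direction. From a feasible $F$ you must produce a strongly connected multigraph on $Z$ whose edges are realized by paths that together use each link of $F$ at most once; otherwise $\ell(B)$ overcounts. The paper does this as follows:
\begin{itemize}
\item For every edge $e$ of $G+F$, take a walk from a source through $e$ to a sink.
\item Form the multigraph of all these edge occurrences. It is strongly connected, balanced at every nonterminal, and contains each link exactly once.
\item Eliminate the nonterminals one by one with Mader's directed splitting theorem, then read off the represented paths.
\end{itemize}
Nothing in your proposal plays the role of this step.
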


\paragraph{Reduction.}
We sketch the reduction for an instance $\mathcal I=(G,L,w,k,t)$.

Let $S$ and $T$ be the sources and sinks of $G$, with isolated vertices in both sets.
The vertices of $Z:=S\cup T$ are \emph{terminals}; all others are \emph{nonterminals}.
Every source needs an entering link and every sink needs a leaving link, so we reject if $|S|>k$ or $|T|>k$.
Henceforth, assume $|S|,|T|\le k$, which gives $|Z|\le2k$.
For distinct $u,v\in Z$ and $0\le h\le k$, let $d_h(u,v)$ be the minimum total link weight of a directed $u$-to-$v$ path in $G+L$ using at most $h$ links, with value infinity if no such path exists.
These values can be computed in polynomial time by \Cref{lem:bounded-link-distances}.
Construct a directed multigraph $J$ with vertex set $Z$ and an edge $uv$ with costs $(\ell(uv),w(uv))=(h,d_h(u,v))$ for every such $u,v,h$ with $d_h(u,v)<\infty$.
Let $\mathcal J=(J,\ell,w,k,t)$ be the \textsc{SCSS} instance with two costs on $J$.

We briefly sketch why the reduction is correct.
One direction is straightforward: every vertex of $G$ is reachable from a source and can reach a sink, so making all terminals mutually reachable makes the augmented graph strongly connected.
If $\mathcal J$ is a yes-instance, replacing each selected edge $uv$ of first cost $h$ by a path attaining $d_h(u,v)$ selects at most $k$ links of total weight at most $t$ and makes the terminals mutually reachable, so $\mathcal I$ is a yes-instance.
For the other direction, Mader's directed splitting theorem gives a strongly connected multigraph on $Z$ represented by paths that use each selected link at most once in total, thereby respecting both budgets.
See \Cref{sec:terminal-reduction} for details.

\paragraph{Algorithms and kernels.}
For an unweighted instance $\mathcal I=(G,L,k)$, let $H:=G+L$, with base edges of cost zero and links of cost one, and write $d_H(u,v)$ for the shortest-path distance from $u$ to $v$ in $H$.
The reduction simplifies to a single-cost instance $\mathcal J=(J,d_H,k)$, where $J$ has an edge $uv$ of cost $d_H(u,v)$ for every distinct $u,v\in Z$ such that $v$ is reachable from $u$ in $H$.
We solve unweighted \sca\ by reducing it to \textsc{SCSS} on at most $2k$ vertices and applying the dynamic program of Jabal Ameli et al.~\cite{ameli2026}.

\begin{restatable}{theorem}{unweightedtheorem}
\label{thm:unweighted}
Unweighted \sca\ can be solved in $\Ostar{4^k}$ time.
\end{restatable}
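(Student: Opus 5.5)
The plan is to combine \Cref{thm:reduction}, specialized to the unweighted setting, with the dynamic program of Jabal Ameli et al.~\cite{ameli2026} for \textsc{SCSS}.

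\textbf{Step 1: terminals.} Given an unweighted instance $\mathcal I=(G,L,k)$, compute the sources $S$ and sinks $T$ of $G$ (isolated vertices belong to both). If $|S|>k$ or $|T|>k$, reject. This is sound because every source needs its own entering link and every sink its own leaving link, and no link can serve as the entering link of two different sources. Otherwise the terminal set $Z=S\cup T$ has $|Z|\le 2k$.

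\textbf{Step 2: build $J$.} With all weights zero (or $t$ large), the two-cost instance of \Cref{thm:reduction} collapses to a single cost. Among the parallel edges $uv$ with costs $(h,d_h(u,v))$, only the one with smallest $h$ matters. That smallest $h$ is exactly $d_H(u,v)$ in $H=G+L$, where base edges cost zero and links cost one. So I would compute 0/1-BFS distances between all pairs of terminals in polynomial time. Then I build $J$ on vertex set $Z$ with an edge $uv$ of cost $d_H(u,v)$ whenever $v$ is reachable from $u$ in $H$. Equivalence with $\mathcal I$ is the content of \Cref{thm:reduction}, and I would invoke it as stated. Dropping dominated parallel edges preserves equivalence, since any solution using a costlier parallel copy can swap in the cheaper one.

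\textbf{Step 3: solve SCSS.} The task is now to decide whether $J$ has a strongly connected spanning subgraph of total cost at most $k$. I would apply the algorithm of~\cite{ameli2026}, which I take to solve minimum-cost \textsc{SCSS} on an $n$-vertex digraph in $\Ostar{2^n}$ time, for example via a dynamic program over vertex subsets built from ear decompositions. With $n=|Z|\le 2k$ this gives $\Ostar{2^{2k}}=\Ostar{4^k}$. All preprocessing is polynomial, so the total running time is $\Ostar{4^k}$.

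\textbf{Main obstacle.} The delicate point is the exact guarantee quoted from~\cite{ameli2026}. The algorithm must run in $\Ostar{2^n}$ time with nonnegative integer edge costs, possibly including cost $0$ (costs here are at most $k$, hence polynomially bounded). It must also either tolerate multigraph inputs or be given the simple digraph obtained by keeping the cheapest parallel edge, which Step 2 already provides. A further subtlety is the degenerate case $|Z|=1$, i.e.\ $G$ is a single vertex, which is trivially a yes-instance. If that theorem is only stated for positive costs, I would handle zero-cost edges by first contracting the strongly connected components formed by zero-cost edges of $J$. This is safe because those edges can be added to any solution for free, and contraction only decreases $n$.
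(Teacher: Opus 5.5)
Your Steps 1 and 2 match the paper: reject if $|S|>k$ or $|T|>k$, build $J$ on $Z$ with edge costs $d_H(u,v)$, and solve \textsc{SCSS} on $n\le 2k$ vertices. The rejection rule, dropping dominated parallel edges, and the $|Z|=1$ case are all handled correctly.

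The gap is in Step 3, at the point you flag as the main obstacle but leave open. As the paper notes, the $\Ostar{2^n}$ algorithm of Jabal Ameli et al.\ is for \emph{unweighted} \textsc{SCSS}. Your $J$ has costs in $\{0,1,\dots,k\}$. These cannot be reduced to unit costs by subdividing edges, because that inflates $n$ and destroys the $2^n$ bound.

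The costed version of their ear-decomposition dynamic program is easy to write down. It uses $\mathsf P[u,v,X]$ computed by Held--Karp and $\mathsf D[q,X]=\min_{u,v,Y}\bigl(\mathsf D[q-1,X\setminus Y]+\mathsf P[u,v,Y]\bigr)$. Evaluated by enumerating the disjoint pairs $(X\setminus Y,Y)$, however, it takes $\Ostar{3^n}=\Ostar{9^k}$ time, which is exactly how the paper obtains the weighted bound.

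The missing idea behind $\Ostar{4^k}$ is this. For fixed ear endpoints $u,v$, each transition is a $(\min,+)$ subset convolution. The fast subset convolution of Bj\"orklund, Husfeldt, Kaski, and Koivisto computes it in $\Ostar{2^n M}$ time when all entries are integers bounded by $M$. Here $M$ is polynomial, because a decomposition uses at most $2n-2$ edges, each of cost at most $|L|$ (or capped at $k+1$). You observe that the costs are polynomially bounded, but this is where that fact is actually needed, and your proposal never uses it.

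You also need that restricting to ears which introduce a new vertex loses nothing. This holds because, with nonnegative costs, an ear without internal vertices can be dropped from an ear decomposition without breaking strong connectivity.

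Your fallback for zero costs would not work as stated. Since $G$ is acyclic, $d_H(u,v)=0$ only when $u$ is a source and $v$ is a sink reachable from it in $G$. So the zero-cost edges of $J$ form no nontrivial strongly connected components, and contracting them removes none of those edges. In the costed dynamic program zero costs cause no difficulty, so this detour becomes unnecessary once you supply the dynamic program and the fast $(\min,+)$ subset convolution argument yourself.
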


For \sca, we adapt the dynamic program to the two edge costs, obtaining the following running time.

\begin{restatable}{theorem}{weightedtheorem}
\label{thm:weighted}
\sca\ can be solved in $\Ostar{9^k}$ time.
\end{restatable}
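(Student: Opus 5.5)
We combine \Cref{thm:reduction} with a two-cost version of the dynamic program of Jabal Ameli et al.~\cite{ameli2026} for \textsc{SCSS}. First, we apply \Cref{thm:reduction} to obtain, in polynomial time, an equivalent instance $\mathcal J=(J,\ell,w,k,t)$ of \textsc{SCSS} with two costs on a vertex set $Z$ with $|Z|\le 2k$. Parallel edges with the same first cost $h$ can be pruned to the one of minimum second cost, and edges with $\ell(e)>k$ can be deleted. Hence $J$ has at most $(k+1)|Z|^2$ edges. The first cost $\ell$ takes values in $\{0,\dots,k\}$, so any partial solution has first cost bounded by $k$. The algorithm will therefore minimize the second cost $w$ subject to a bound on $\ell$, tracking $\ell$ as an explicit index in $\{0,\dots,k\}$. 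This blows up the table size only by a polynomial factor and avoids any pseudo-polynomial dependence on $w$.

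Next, we need a characterization amenable to dynamic programming. A strongly connected spanning subgraph admits an ear decomposition. Equivalently, following Jabal Ameli et al., a minimal strongly connected subgraph can be built by repeatedly attaching directed paths (ears) to an already strongly connected vertex set $X\subseteq Z$. We define $D[X][c]$ as the minimum $w$-cost of an edge set $B$ with $\ell(B)\le c$ such that $(X,B)$ is strongly connected. The base cases are singletons with $D=0$. The transition extends $X$ by a set $Y$ disjoint from $X$ via an ear that leaves $X$, traverses $Y$, and re-enters $X$. For the single-cost problem, the $4^k$ bound comes from computing the cheapest ears for all pairs $(X,Y)$ using a subset-path DP over $2^{|Z|}$ subsets, which gives $O^*(2^{|Z|})=O^*(4^k)$. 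With two costs, we instead compute, for every disjoint pair $(X,Y)$ and budget $c$, the cheapest ear in $w$ with $\ell$-cost at most $c$. Summing over disjoint pairs $(X,Y)$ of subsets of $Z$ gives $3^{|Z|}\le 9^k$ states, times a polynomial factor for budgets and endpoints. An ear through $Y$ attached at $X$ can be computed by a Hamiltonian-path style DP over the vertices of $Y$ with the last vertex and the used budget as indices; this costs $O^*(3^{|Z|})$ in total. The outer recurrence $D[X\cup Y][c]=\min_{c'}\bigl(D[X][c-c']+\mathrm{ear}(X,Y,c')\bigr)$ also ranges over disjoint pairs, so it too costs $O^*(9^k)$. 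Finally, we answer yes iff $D[Z][k]\le t$.

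The main obstacle is correctness of the ear recurrence with the combined costs. One has to check that the edges of an optimal solution $B$ split into ears whose $\ell$- and $w$-costs add up exactly. This holds because an ear decomposition partitions the edge set of $B$ and both costs are additive over edges. The only genuinely new point relative to~\cite{ameli2026} is that the minimization is no longer over a single scalar, so a dominance argument is needed. For fixed $X$ and budget $c$, keeping only the minimum $w$ suffices, since any extension adds costs independently of how the partial solution was built. We would verify this exchange argument carefully. We would also double-check that the running time is $3^{|Z|}$ rather than $4^{|Z|}$ by ensuring the ear DP needs no extra subset index beyond the pair $(X,Y)$.
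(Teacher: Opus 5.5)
Your proposal is correct and follows essentially the same route as the paper: reduce via \Cref{thm:reduction} to \textsc{SCSS} with two costs on $n\le 2k$ terminals, then run the ear-decomposition dynamic program of Jabal Ameli et al.\ with the first cost as an explicit index in $\{0,\dots,k\}$, minimizing the second cost and evaluating each recurrence by direct enumeration of disjoint subset pairs in $O^*(3^n)=O^*(9^k)$ time; your ``at most $c$'' budgets and dropping the ear-count index are harmless variations. One side remark is slightly off: in the single-cost case the $O^*(2^n)$ bound comes from evaluating the outer recurrence by fast $(\min,+)$ subset convolution, not from the ear computation, but this does not affect your argument for the weighted theorem.
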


The reduction also yields polynomial kernels.
Recall that a \emph{kernelization} is a polynomial-time transformation into an equivalent instance of the same parameterized problem whose size and parameter are bounded by functions of the original parameter.
The unweighted kernel retains links on shortest paths between terminals and compresses base-edge reachability to the terminals and retained link endpoints.

\begin{restatable}{theorem}{unweightedkernel}
\label{thm:kernel-unweighted}
Unweighted \sca\ admits a polynomial kernel with $O(k^3)$ vertices.
\end{restatable}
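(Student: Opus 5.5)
We build the kernel directly from the reduction of \Cref{thm:reduction} in the unweighted setting, keeping only those parts of $G+L$ that certify the shortest distances $d_H(u,v)$ between terminals. First we apply the trivial rejection: if $|S|>k$ or $|T|>k$, we output a constant-size no-instance. Otherwise $|Z|\le 2k$. For every ordered pair of distinct terminals $u,v\in Z$ with $d_H(u,v)\le k$, we fix one shortest $u$-to-$v$ path $P_{uv}$ in $H=G+L$, where base edges cost $0$ and links cost $1$. Pairs with $d_H(u,v)>k$ can be discarded, since an edge of cost larger than $k$ is never usable in the \textsc{SCSS} instance $\mathcal J=(J,d_H,k)$. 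Let $L'$ be the union of the links on the chosen paths. Each path contributes at most $k$ links and there are at most $(2k)^2$ pairs, so $|L'|\le 4k^3$. Let $X$ consist of $Z$ together with all endpoints of links in $L'$; then $|X|\le 2k+8k^3=O(k^3)$.

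The new instance is $(G',L',k)$. Here $G'$ is the digraph on $X$ with an edge $xy$ whenever $x\ne y$ and $y$ is reachable from $x$ in $G$. We take the transitive reduction, or just the reachability closure; either way $G'$ is a DAG because $G$ is acyclic. Loops and links with both endpoints in one vertex are irrelevant.

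For correctness we compare the two \textsc{SCSS} instances given by \Cref{thm:reduction}. The first step is to check that the sources and sinks of $G'$ are exactly $S$ and $T$. Every $x\in X\setminus Z$ is reachable from some source and reaches some sink in $G$, and those terminals lie in $X$, so $x$ is neither a source nor a sink of $G'$. Conversely, a terminal in $S$ has no in-edge in $G$ at all, so it has none in $G'$; the symmetric argument handles $T$.

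The second step is to show that the distances between terminals agree up to the threshold $k$. Let $H'=G'+L'$. Every path in $H'$ lifts to a path in $H$ with the same number of links, so $d_{H'}\ge d_H$. In the other direction, each chosen path $P_{uv}$ decomposes into maximal base-edge segments between consecutive vertices of $X$, separated by links of $L'$. Every segment becomes a single edge of $G'$, so $d_{H'}(u,v)\le d_H(u,v)$ whenever $d_H(u,v)\le k$. When $d_H(u,v)>k$, we also get $d_{H'}(u,v)>k$. Hence the reduced instances $\mathcal J$ and $\mathcal J'$ coincide after deleting edges of cost above $k$, and by \Cref{thm:reduction} the instances $\mathcal I$ and $\mathcal I'$ are equivalent. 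The output has $O(k^3)$ vertices and the parameter is unchanged.

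The only step needing care is this verification that terminals are preserved and that the threshold-truncated distances match. In particular, we must not introduce any new source or sink when restricting to $X$. The argument above handles this because reachability in $G$ from a source to $x$, and from $x$ to a sink, passes to $G'$. Computing the shortest paths is a 0/1-BFS, so the whole procedure runs in polynomial time.
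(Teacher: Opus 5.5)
Your proposal is correct and follows essentially the same route as the paper. Like the paper, you keep one fewest-link path per ordered terminal pair (when it uses at most $k$ links), compress base-edge reachability onto the terminals and the retained link endpoints, and use \Cref{thm:reduction} to show the terminal graph is unchanged, giving $|X|\le 2k+8k^3$. The differences are cosmetic: the paper states distance preservation through the bounded-link values $d_h(u,v)$, $0\le h\le k$, which builds in the truncation at $k$ that you do explicitly, and it treats preservation of sources and sinks as immediate, whereas you check it (to finish that check, the same reachability argument also shows that a vertex of $S\setminus T$ does not become a sink of $G'$, and symmetrically for $T\setminus S$).
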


For \sca, we retain a minimum-weight path for each terminal pair and each bound $0\le h\le k$ on the number of links, whenever such a path exists.
We then apply Frank--Tardos weight reduction~\cite{frank1987} to compress the retained link weights and the budget.

\begin{restatable}{theorem}{weightedkernel}
\label{thm:kernel}
\sca\ admits a polynomial kernel with $O(k^4)$ vertices and $O(k^{16})$ bits.
\end{restatable}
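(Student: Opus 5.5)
We build the kernel in three stages: reduce to a bounded terminal set, keep only the links lying on a small family of optimal paths, then compress the weights.

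\textbf{Step 1 (terminals and retained paths).} Compute the sources $S$ and sinks $T$ of $G$. Reject, by outputting a trivial no-instance, if $|S|>k$ or $|T|>k$, so that $|Z|\le 2k$. Using \Cref{lem:bounded-link-distances}, for every ordered pair of distinct terminals $u,v\in Z$ and every $0\le h\le k$ with $d_h(u,v)<\infty$, fix a $u$-to-$v$ path $P_{u,v,h}$ in $G+L$ that uses at most $h$ links and has link weight $d_h(u,v)$. Let $L'$ be the union of the links on these paths, so $|L'|\le (2k)^2(k+1)\cdot k=O(k^4)$. Let $U$ consist of $Z$ together with all endpoints of links in $L'$; then $|U|=O(k^4)$.

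\textbf{Step 2 (compressing the base graph).} Build the DAG $G'$ on $U$ with an edge $xy$ whenever $y$ is reachable from $x$ in $G$, and take its transitive reduction (any edge set with the same reachability on $U$ suffices). Since $G'$ has the same reachability relation on $U$ as $G$, it is acyclic. Every source of $G$ is still a source of $G'$, and every sink is still a sink.

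We must also rule out new sources or sinks. Any $x\in U$ reaches some sink of $G$, which lies in $T\subseteq U$, so $x$ has an out-edge in $G'$ unless $x$ is itself a sink; symmetrically for in-edges. Hence $G'$ has exactly the terminal set $Z$.

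Now apply \Cref{thm:reduction} to $(G',L',w|_{L'},k,t)$. Let $d'_h$ denote its bounded-link distances. Each path $P_{u,v,h}$ maps to a path in $G'+L'$: its base-edge segments become reachability edges between consecutive points of $U$. Therefore $d'_h(u,v)\le d_h(u,v)$. Conversely, any path in $G'+L'$ lifts to a path in $G+L$ with the same links, so $d'_h(u,v)\ge d_h(u,v)$. The two resulting \textsc{SCSS} instances are therefore identical, and by \Cref{thm:reduction} the original instance and the new instance are equivalent. This is the key step. One point needs care: the equivalence in \Cref{thm:reduction} must hold for every pair $(G,L)$ with the same terminal set and the same $d_h$ values. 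This holds because both directions of that proof use only the terminals and the $d_h$ values.

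\textbf{Step 3 (weights).} The instance now has $O(k^4)$ vertices and $O(k^4)$ links, but the weights and $t$ may be huge. Apply Frank--Tardos~\cite{frank1987} to the vector $(w|_{L'},t)$, of dimension $N=O(k^4)+1$, with accuracy $M=N+1$. This gives integer weights $\bar w$ and a budget $\bar t$ with $\|(\bar w,\bar t)\|_\infty\le 2^{O(N^3)}$. It also guarantees that $\operatorname{sign}(w\cdot b-t)=\operatorname{sign}(\bar w\cdot b-\bar t)$ for every $b\in\mathbb Z^{N-1}$ with $\|b\|_1\le N-1$, in particular for every characteristic vector of $F\subseteq L'$. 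So the feasibility of each candidate set $F$ with respect to the budget is preserved, and the instances are equivalent.

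Each number then takes $O(k^{12})$ bits, and there are $O(k^4)$ of them, for $O(k^{16})$ bits in total. The graph itself takes $O(k^8)$ bits. Since Frank--Tardos may produce negative entries, we shift $\bar w$ to be nonnegative if necessary, adjusting $\bar t$ accordingly; one way is to check that the sign-preservation guarantee forces nonnegativity on unit vectors. The main obstacle is Step 2: verifying that the equivalence in \Cref{thm:reduction} depends only on $Z$ and the values $d_h$, so that the reduced instance inherits it.
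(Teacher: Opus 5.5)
Your proposal is correct and follows the paper's proof essentially step for step: you retain a minimum-weight path with at most $h$ links for every terminal pair and every $0\le h\le k$, replace base edges by reachability edges on the retained vertex set so that every $d_h(u,v)$, and hence the terminal graph $J$, is unchanged, and then apply Frank--Tardos and read off nonnegativity from the unit vectors. Your larger accuracy parameter is harmless, since the factor exponential in the cube of the dimension dominates either way. Two small cleanups are needed. First, drop the ``shift'' remark: adding a constant $c$ to every weight changes $\bar w(F)$ by $c|F|$, which cannot be compensated by adjusting $\bar t$ alone. Second, as the paper does, first replace $k$ by $\min\{k,|L|\}$, so that computing $d_h$ for all $0\le h\le k$ takes polynomial time even when $k$ is encoded in binary.
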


\subsection{Related work}
\label{subsec:related-work}

For unweighted \sca\ with every missing edge available as a link, a minimum augmentation can be found in linear time using the algorithm of Eswaran and Tarjan~\cite{eswaran1976} with the corrected construction of Raghavan~\cite{raghavan2005}.
Frederickson and Ja'Ja'~\cite{frederickson1981} gave constant-factor approximation algorithms for weighted connectivity augmentation problems, including \sca.
Bessy et al.~\cite{bessy2026} gave a single-exponential parameterized algorithm for \textsc{Plane Strong Connectivity Augmentation}, where the input is a connected plane oriented graph and added edges must preserve the embedding and cannot create a pair of oppositely directed edges.

Earlier work on undirected augmentation includes the fixed-parameter algorithms and polynomial kernels of Marx and V\'egh~\cite{marx2015} for increasing edge-connectivity by one, parameterized by the number of added links, also allowing link costs.
Basavaraju et al.~\cite{basavaraju2014} improved the running time to $\Ostar{9^k}$ for this weighted problem by reducing it to \textsc{Steiner Tree} with vertex weights.
Nutov~\cite{nutov2024} obtained $\Ostar{9^k}$-time algorithms for increasing vertex-connectivity from two to three and for increasing rooted vertex-connectivity from $\lambda-1$ to $\lambda$ for arbitrary $\lambda$.

Jabal Ameli et al.~\cite{ameli2026} gave an $\Ostar{2^n}$-time algorithm for unweighted \textsc{SCSS} on $n$ vertices.
Our algorithms combine the reduction to at most $2k$ terminals with their \textsc{SCSS} algorithm, adapted to edge costs and a bound on the number of links (\Cref{sec:algorithm}).

For strongly connected inputs, the \textsc{Minimum Equivalent Digraph} problem, which asks for a smallest spanning subgraph preserving all reachability relations, is precisely unweighted \textsc{SCSS}.
Khuller, Raghavachari, and Young~\cite{khuller1995} gave a $(\pi^2/6+\varepsilon)$-approximation for every fixed $\varepsilon>0$.
Vetta~\cite{vetta2001} obtained a $3/2$-approximation.
Bang-Jensen and Yeo~\cite{bangjensen2008} studied a different parameterization for \textsc{SCSS}: deciding whether an $n$-vertex strong digraph has a strong spanning subgraph with at most $2n-2-p$ edges is fixed-parameter tractable in $p$.

\paragraph{Organization.}
The rest of this paper is organized as follows.
\Cref{sec:terminal-reduction} proves \Cref{thm:reduction} by giving the full reduction to \textsc{SCSS} with two costs and establishing its correctness.
\Cref{sec:applications} gives the unweighted and weighted algorithms in \Cref{sec:algorithm} (\Cref{thm:unweighted,thm:weighted}) and the polynomial kernels in \Cref{sec:kernel} (\Cref{thm:kernel-unweighted,thm:kernel}).
\Cref{sec:conclusion} discusses possible approximation applications.

\section{Reduction to \textnormal{\textsc{SCSS}} with two costs}
\label{sec:terminal-reduction}

Let $\mathcal I=(G,L,w,k,t)$ and $\mathcal J=(J,\ell,w,k,t)$ be the augmentation and \textsc{SCSS} instances in \Cref{subsec:results}.

First, we verify that the distances can be computed in polynomial time.

\begin{lemma}
\label{lem:bounded-link-distances}
For distinct $u,v\in Z$ and $0\le h\le k$, the value $d_h(u,v)$ can be computed in polynomial time.
\end{lemma}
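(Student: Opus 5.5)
We compute $d_h(u,v)$ as a shortest-path computation in a layered product graph that records how many links have been used so far. Let $H:=G+L$, where base edges get weight $0$ and each link $e\in L$ gets weight $w(e)$. Since $G$ is a DAG but $G+L$ may contain cycles, we cannot use a topological order of $H$ directly. The link-counting layers remove the cycles instead.

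Concretely, build an auxiliary digraph $D$ on the vertex set $V(G)\times\{0,1,\dots,h\}$. For every base edge $xy\in E(G)$ and every $0\le i\le h$, add an edge $(x,i)\to(y,i)$ of weight $0$. For every link $xy\in L$ and every $0\le i<h$, add an edge $(x,i)\to(y,i+1)$ of weight $w(xy)$. Every edge of $D$ either stays in a layer and follows an edge of the DAG $G$, or strictly increases the layer index. Hence $D$ is acyclic: order its vertices first by layer, then by a topological order of $G$ inside each layer. All weights are nonnegative, so Dijkstra's algorithm also works; $D$ has $O(h\,|V(G)|)\le O(k\,|V(G)|)$ vertices and polynomially many edges, with $k\le|L|$ w.l.o.g.

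Correctness is a two-way correspondence.
\begin{itemize}
\item Every $(u,0)$-to-$(v,i)$ path in $D$ projects onto a $u$-to-$v$ walk in $G+L$ that uses exactly $i\le h$ link occurrences and has the same weight.
\item Conversely, take a $u$-to-$v$ path in $G+L$ with $i\le h$ links and track the number of links used so far. This lifts it to a $(u,0)$-to-$(v,i)$ path in $D$ of equal weight.
\end{itemize}
A projected walk might repeat vertices. Deleting the closed subwalks yields a path with no more links and no more weight, because weights are nonnegative. Therefore
\[
d_h(u,v)=\min_{0\le i\le h}\operatorname{dist}_D\bigl((u,0),(v,i)\bigr),
\]
and this value is $\infty$ exactly when no target is reachable. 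Computing it takes one shortest-path computation from $(u,0)$ per pair $(u,v)$ and per value of $h$, which is polynomial overall.

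The only point needing care is the walk-versus-path issue in the converse direction, namely that shortcutting a walk never increases the link count or the weight. This follows from the nonnegativity of $w$ and is routine, so the argument has no serious obstacle.
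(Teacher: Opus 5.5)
Your proposal is correct and is essentially the paper's proof: the same layered DAG with layers $0,\dots,h$, zero-weight base edges inside each layer and links advancing one layer, and the same lift/project-and-shortcut argument giving $d_h(u,v)=\min_{0\le i\le h}\operatorname{dist}_D((u,0),(v,i))$. Your remark that one may assume $k\le|L|$, so that the layered graph has polynomial size even if $k$ is encoded in binary, is a point the paper's proof here leaves implicit.
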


\begin{proof}
Construct a DAG with layers $0,\ldots,h$, each containing a copy of $G$ with base edges of weight zero.
For each link $xy\in L$ and $0\le i<h$, add an edge of weight $w(xy)$ from $x$ in layer $i$ to $y$ in layer $i+1$.
Then $d_h(u,v)$ is the minimum distance from $u$ in layer $0$ to $v$ in any layer $0,\ldots,h$.
Indeed, each path in $G+L$ using at most $h$ links lifts to the layered DAG, and a path in the layered DAG projects to a walk from which closed subwalks can be deleted without increasing its link count or weight.
The layered DAG has polynomial size, so its shortest-path distances can be computed in polynomial time.
\end{proof}

For every distinct $u,v\in Z$ and $0\le h\le k$ with $d_h(u,v)<\infty$, let $P_{uv,h}$ be a $u$-to-$v$ path in $G+L$ using at most $h$ links and attaining $d_h(u,v)$.

\paragraph*{Correctness.}
We first prove the direction from $\mathcal J$ to $\mathcal I$ by expanding the selected edges into their chosen paths.
Call $F\subseteq L$ \emph{feasible} if $G+F$ is strongly connected.

\begin{samepage}
\begin{lemma}
\label{lem:scss-to-augmentation}
If $\mathcal J$ is a yes-instance, then $\mathcal I$ is a yes-instance.
\end{lemma}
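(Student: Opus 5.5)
Take a solution $B\subseteq E(J)$ of $\mathcal J$, with $\ell(B)\le k$, $w(B)\le t$ and $(Z,B)$ strongly connected. Define $F$ as the union of the link sets of the chosen paths, that is, $F:=\bigcup_{uv\in B}\bigl(E(P_{uv,h})\cap L\bigr)$, where $h=\ell(uv)$ for each selected edge $uv$ (a parallel edge of $J$ is specified by its triple $(u,v,h)$). Since $J$ only contains an edge $uv$ with first cost $h$ when $d_h(u,v)<\infty$, every path $P_{uv,h}$ used here exists. We then verify three things in order: the budgets on $|F|$, the budget on $w(F)$, and the strong connectivity of $G+F$.

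For the budgets, a union can only overcount, so we bound $|F|$ and $w(F)$ by the sums over $B$. Each $P_{uv,h}$ uses at most $h=\ell(uv)$ links, hence $|F|\le\sum_{uv\in B}\ell(uv)=\ell(B)\le k$. Link weights are nonnegative, so $w(F)\le\sum_{uv\in B}w\bigl(E(P_{uv,h})\cap L\bigr)=\sum_{uv\in B}d_h(u,v)=w(B)\le t$. The only subtlety is that several selected edges may share links, and this only helps, since nonnegativity of $w$ makes the union bound valid.

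For strong connectivity, each $P_{uv,h}$ is a path in $G+F$, because its base edges lie in $G$ and its links lie in $F$. Hence every edge of $B$ yields reachability from $u$ to $v$ in $G+F$. Strong connectivity of $(Z,B)$ and transitivity then make all terminals mutually reachable in $G+F$.

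It remains to extend this to arbitrary $x,y\in V(G)$. Since $G$ is a DAG, following edges backwards from $x$ terminates at a source $s\in S$ that reaches $x$, and following edges forwards from $y$ terminates at a sink $s'\in T$ that $y$ reaches. Isolated vertices are both sources and sinks, so they are handled as well. We then get the chain $x\to s'$ in $G$, $s'\to s$ through the terminals, and $s\to y$ in $G$. If $x\to s'$ should instead read $x$ reaching a sink, we use a sink $t_x$ reachable from $x$ and a source $s_y$ reaching $y$, giving $x\to t_x\to s_y\to y$.

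The degenerate case $|Z|=1$ means $G$ is a single isolated vertex, which is trivially strongly connected with $F=\emptyset$. None of these steps is a real obstacle. The only step needing care is the weight accounting when links are shared among paths, and nonnegativity of $w$ settles it.
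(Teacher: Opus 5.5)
Your proof is correct and follows the paper's argument essentially verbatim: you expand each selected edge of $B$ into $P_{uv,h}$, bound $|F|\le\ell(B)$ and $w(F)\le w(B)$ via nonnegativity despite shared links, and route every pair $x,y$ through a sink reachable from $x$ and a source reaching $y$. One presentation fix: in your first chain, $s$ is a source reaching $x$ and $s'$ is a sink reachable from $y$, so $x\to s'\to s\to y$ is not justified as written; delete it and keep only your corrected chain $x\to t_x\to s_y\to y$, which is exactly the paper's.
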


\begin{proof}
Suppose $\mathcal J$ is a yes-instance, and let $(Z,B)$ be an \textsc{SCSS} of $J$ with $\ell(B)\le k$ and $w(B)\le t$.
For every edge $e=uv\in B$ with $\ell(e)=h$, take the path $P_{uv,h}$, and let $F_B$ be the union of the link sets of these paths.
Each path uses at most $\ell(e)=h$ links and has weight $w(e)=d_h(u,v)$.
Since all weights are nonnegative, $|F_B|\le \ell(B)$ and $w(F_B)\le w(B)$, even if chosen paths share links.
Expanding directed paths in $(Z,B)$ into the chosen paths shows that every terminal can reach every other terminal in $G+F_B$.
Now let $x,y\in V(G)$.
Since $G$ is acyclic, it contains paths from $x$ to a sink $v$ and from a source $u$ to $y$.
Concatenating the $x$-to-$v$ path, a $v$-to-$u$ walk in $G+F_B$, and the $u$-to-$y$ path gives an $x$-to-$y$ walk, so $F_B$ is feasible.

Thus $F_B$ satisfies both bounds of $\mathcal I$, so $\mathcal I$ is a yes-instance.
\end{proof}
\end{samepage}

For the forward direction, we use Mader's directed splitting theorem to represent feasible augmentations by terminal paths.
For a directed multigraph $H$, let $d_H^-(v)$ and $d_H^+(v)$ be the indegree and outdegree of $v$, counting multiplicities; a loop contributes one to each.
A vertex $v$ is \emph{balanced} if $d_H^-(v)=d_H^+(v)$.
\emph{Splitting} a pair $xv$, $vy$ deletes those two edges and adds a new edge $xy$; parallel edges and loops are permitted.

\medskip
\noindent\begin{minipage}{\linewidth}
\begin{lemma}[{Mader's directed splitting theorem; \cite[Corollary~7.5.3]{bangjensen2007}}]
\label{lem:splitting}
Let $H$ be a loopless strongly connected directed multigraph, and let $v\in V(H)$ be balanced with $|V(H)\setminus\{v\}|\ge2$.
There is a bijection from the edges entering $v$ to the edges leaving $v$ such that splitting every matched pair and deleting $v$ yields a strongly connected directed multigraph on $V(H)\setminus\{v\}$.
\end{lemma}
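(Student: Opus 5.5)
I would prove \Cref{lem:splitting} by splitting off one pair at a time, each time keeping every pair of vertices in $V(H)\setminus\{v\}$ mutually reachable. Since $v$ is balanced, after $d^+_H(v)$ splittings $v$ is isolated. Deleting it then leaves a strongly connected multigraph on $V(H)\setminus\{v\}$. The matched pairs define the bijection. So the claim reduces to the following step: if $v$ is balanced with positive degree and all vertices of $V(H)\setminus\{v\}$ are mutually reachable, then for each edge $xv$ there is some edge $vy$ whose splitting preserves this. Splitting a pair $xv,vy$ leaves $v$ balanced. Loops created at $x$ when $x=y$ are harmless for reachability and can be discarded.

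To find an admissible pair, I would use cut conditions. Reachability among $V\setminus\{v\}$ holds iff $d^+(X)\ge1$ for every $X\subseteq V(H)$ that separates two vertices of $V\setminus\{v\}$; call such $X$ \emph{relevant}. Splitting $xv,vy$ changes $d^+(X)$ only in two cases: it drops by exactly one when $x,y\in X$ and $v\notin X$, or when $x,y\notin X$ and $v\in X$. Replacing $X$ by its complement swaps out-degree and in-degree. Hence the split is inadmissible iff there is a relevant $X$ with $x,y\in X$ and $v\notin X$ that is \emph{dangerous}, meaning $d^+(X)=1$ or $d^-(X)=1$. Because $xv$ leaves such an $X$, we have $d^+(X)\ge1$.

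Now fix $xv$ and suppose, for contradiction, that every $vy$ is inadmissible. I would take a maximal dangerous set $X$ with $x\in X$ and $v\notin X$, and show that it contains every out-neighbour $y$ of $v$. This is the main obstacle, and I would handle it by uncrossing with submodularity of $d^+$ and of $d^-$. If $X$ and $Y$ are dangerous, both contain $x$, and both avoid $v$, then $X\cap Y\ne\emptyset$ and $v\notin X\cup Y$. When the danger is of the same type, say $d^+(X)=d^+(Y)=1$, the inequality $d^+(X\cup Y)+d^+(X\cap Y)\le 2$ together with $d^+(\cdot)\ge1$ on relevant sets forces $d^+(X\cup Y)=1$. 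I would also need to handle the mixed type, and relevance of $X\cup Y$, which fails only if $X\cup Y=V\setminus\{v\}$.

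Given maximal $X$ covering all out-neighbours of $v$, every edge leaving $v$ enters $X$, so $d^-(X)\ge d^+(v)$, while $d^+(X)\ge d^-(v)\ge1$ counts the edges into $v$ from $X$. If $X=V\setminus\{v\}$ is the union, then $d^-(X)=d^+(v)$, $d^+(X)=d^-(v)$, and balancedness forces both to be $1$. I expect this corner case, where $v$ has degree one, to be the delicate part. There the single split simply bypasses $v$ and trivially preserves reachability, so it is not really a counterexample. Otherwise the dangerous value $1$ contradicts $d^-(X)\ge d^+(v)=d^-(v)$ together with the extra edges forced by relevance. This is where I expect the hypothesis $|V(H)\setminus\{v\}|\ge2$ to be used.
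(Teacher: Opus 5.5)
The paper does not prove this lemma; it cites Bang-Jensen and Gutin. Your outline is the standard uncrossing proof of Mader's theorem specialized to connectivity one, and several ingredients are right: the cut characterization, the effect of a split on $d^+(X)$ and $d^-(X)$, the reduction to dangerous $X\ni x,y$ with $v\notin X$, same-type uncrossing, and the degree-one corner case.

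\textbf{The gap: the mixed case.} You leave open the step that carries the content of the theorem, namely $d^+(X)=1$ and $d^-(Y)=1$. Submodularity of $d^+$ alone, or of $d^-$ alone, gives nothing here. You have to cross $X$ with $\bar Y:=V(H)\setminus Y$. Write $d(A,B)$ for the number of edges from $A$ to $B$ and set $R:=V(H)\setminus(X\cup Y)$. Counting edges in the submodular identity for $d^+$ on $X$ and $\bar Y$ gives
\[
d^+(X)+d^-(Y)\;\ge\; d^+(X\setminus Y)+d^-(Y\setminus X)+d(X\cap Y,R).
\]
Suppose $X$ and $Y$ cross. Then $X\setminus Y$ and $Y\setminus X$ are nonempty subsets of $V(H)\setminus\{v\}$ avoiding $x$, hence relevant, so each of the first two terms on the right is at least $1$. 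The fixed edge $xv$ goes from $X\cap Y$ to $v\in R$, so the last term is also at least $1$. This yields $2\ge3$.

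Hence an out-dangerous set and an in-dangerous set containing $x$ are nested, and their union is again relevant and dangerous. This is the one place where the fixed edge $xv$ is essential. Without it you cannot conclude that a maximal dangerous set absorbs every $X_y$.

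\textbf{The closing contradiction is misstated.} The inequality $d^+(X)\ge d^-(v)$ is false unless every in-neighbour of $v$ lies in $X$; you only know $d^+(X)\ge d(X,v)\ge1$. Argue instead with $Z:=V(H)\setminus(X\cup\{v\})$, which is nonempty and relevant because $X$ is.

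If $d^+(X)=1$, the only edge leaving $X$ is $xv$. No edge goes from $v$ to $Z$, since all out-neighbours of $v$ lie in $X$. Hence $d^-(Z)=0$.

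If $d^-(X)=1$, then $d^+(v)=d(v,X)\le d^-(X)=1$. By balance, $xv$ is the unique edge entering $v$, and $d(Z,X)=0$. Hence $d^+(Z)=0$.

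Either case contradicts the relevance of $Z$. Finally, the hypothesis $|V(H)\setminus\{v\}|\ge2$ is not where the argument bites. When it fails there are no relevant sets, and every bijection works.
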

\end{minipage}

\begin{lemma}
\label{lem:terminal-reduction}
If $\mathcal I$ is a yes-instance, then $\mathcal J$ is a yes-instance.
\end{lemma}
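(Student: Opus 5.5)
Let $F\subseteq L$ be feasible with $|F|\le k$ and $w(F)\le t$. The plan is to build an auxiliary multigraph $H$ on $Z$ whose edges are each represented by a $u$-to-$v$ path in $G+L$, such that every link of $F$ is used by at most one representing path and each path is the concatenation of base-edge paths and links of $F$. Each edge $uv$ of $H$ whose path uses $h$ links of weight $c$ is then replaced by the edge $uv$ of $J$ with $\ell=h$ and $w=d_h(u,v)\le c$. Since the representing paths use disjoint link sets, we get $\ell(B)\le|F|\le k$ and $w(B)\le w(F)\le t$. Edges of $H$ whose paths use no link, or loops, can simply be discarded. Discarding such an edge $uv$ is harmless: $v$ is then reachable from $u$ in $G$, and $d_0(u,v)$ is finite with cost $(0,0)$, so we may as well keep it.

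To construct $H$, start from a compressed version of $G+F$. Take the multigraph $H_0$ whose vertex set is $Z$ together with all endpoints of links in $F$, and whose edges are the links of $F$ together with an edge $xy$ for every ordered pair with $y$ reachable from $x$ in $G$. Each such edge is labelled with its representing path. Then $H_0$ is strongly connected, because $G+F$ is and every maximal base-edge segment between consecutive relevant vertices becomes one edge. Base edges may be reused freely, so only the links need to be used at most once. Label every edge by the set of links on its path; at the start, base-reachability edges carry the empty set and link edges carry a singleton.

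Next, eliminate every nonterminal vertex $v$ using Lemma~\ref{lem:splitting}. This is the main obstacle, because the lemma requires $v$ to be balanced, while $H_0$ is not. The fix I would try is to delete every base-reachability edge and add back only a carefully chosen subset. Take the links of $F$ as the "expensive" edges and add base-reachability edges so that every nonterminal becomes balanced while strong connectivity is preserved. Concretely, because $G+F$ is strongly connected, take a minimal strongly connected spanning structure of $G+F$ and decompose it. A cleaner route is to use edge multiplicities: duplicate base-reachability edges as needed to balance each nonterminal. Duplicating a base-reachability edge costs nothing, since base paths can be reused. Balancing is possible because we may add copies of an edge $xy$ with $y$ reachable from $x$ in $G$. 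By a flow/circulation argument (Hoffman), a strongly connected multigraph can be made Eulerian by adding copies of existing edges. All edges can be duplicated, but each link must appear only once. I would therefore add copies only of base edges and of edges $xy$ that are paths in $G+F$ through links. Such an added copy reuses a link, however, so I would instead balance each nonterminal using base-reachability edges to and from terminals. Every nonterminal $v$ has a source $s$ with $v$ reachable from $s$ in $G$ and a sink $r$ with $r$ reachable from $v$ in $G$. Adding edges $sv$ or $vr$ therefore fixes any imbalance at $v$ while keeping all links used once. Since the imbalance is pushed onto the terminals, which need not be balanced, this works.

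Then repeatedly apply Lemma~\ref{lem:splitting} to each balanced nonterminal $v$. Loops at $v$ can be removed first, and $|Z|\ge2$ can be assumed, handling the trivial cases separately. A split edge $xy$ obtains the concatenated path, whose label is the disjoint union of the two labels. Hence each link still lies in exactly one edge label, and splitting keeps the other vertices' balance unchanged. The final graph $H$ is a strongly connected multigraph on $Z$ with path-represented edges and disjoint link labels. The walk obtained by concatenation can be shortcut to a path without increasing the number of links or the weight, so each edge $uv$ with label set of size $h$ dominates the edge of $J$ with $\ell=h$. This yields $B$ and shows that $\mathcal J$ is a yes-instance.
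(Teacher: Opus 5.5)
Your argument is correct and is essentially the paper's proof. You balance every nonterminal by routing through sources and sinks along base paths, adding copies of $sv$ or $vr$ on the compressed graph where the paper instead extends each edge $e$ of $G+F$ to a source-to-sink walk $W_e$, and then you eliminate nonterminals with \Cref{lem:splitting} while tracking disjoint link labels. Two small points to tidy: link-free edges must be kept in $B$ at cost $(0,0)$ rather than discarded, since dropping them can destroy strong connectivity. Also, all loops, not only those at $v$, should be deleted before each application of \Cref{lem:splitting}; this is harmless because it preserves strong connectivity and balance.
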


\begin{proof}
Suppose $\mathcal I$ is a yes-instance, and let $F\subseteq L$ be a feasible augmentation with $|F|\le k$ and $w(F)\le t$.

\begin{claim}
There exist a loopless strongly connected directed multigraph $M$ on $Z$ and, for every edge $e=uv\in E(M)$, a directed $u$-to-$v$ path $P_e$ in $G+F$ such that each link in $F$ occurs at most once across these paths.
\end{claim}

\begin{proof}[Proof of the claim]
For each edge $e=uv$ of $G+F$, let $W_e$ be the walk obtained by concatenating a path in $G$ from a source to $u$, the edge $e$, and a path in $G$ from $v$ to a sink.
Define a directed multigraph $H$ with vertex set $V(G)$ and edge multiset consisting of one distinct copy for every edge occurrence in every walk $W_e$, for $e\in E(G+F)$.
Since each edge $e$ of $G+F$ occurs in $W_e$, $H$ contains $G+F$ as a spanning subgraph and is therefore strongly connected.
The two paths concatenated with $e$ to form $W_e$ lie in $G$ and hence contain no links of $F$.
Thus each link $e\in F$ occurs only in $W_e$, exactly once, so $H$ contains exactly one copy of $e$.
Every nonterminal is balanced: the walks have endpoints in $Z$, and each internal visit contributes one entering and one leaving edge.

Let $H'$ be a copy of $H$.
While $H'$ contains a nonterminal, choose $v\in V(H')\setminus Z$, split the pairs given by \Cref{lem:splitting}, and delete $v$ and any resulting loops.
Splitting preserves the degrees of every remaining vertex, and deleting a loop decreases both degrees by one, so every remaining nonterminal stays balanced.
Since $|Z|\ge2$, \Cref{lem:splitting} applies in every iteration.
When $V(H')=Z$, let $M:=H'$, which is loopless and strongly connected.

Reversing the splittings, we obtain a directed $u$-to-$v$ walk in $H$ for every edge $e=uv\in E(M)$.
These walks use each edge of $H$ at most once in total: initially each edge represents itself, each split concatenates two representing walks, and loop deletion discards a representing walk.
Replace the copies by their original edges and delete closed subwalks to obtain the paths $P_e$ in $G+F$ with the same endpoints.
Each link has only one copy in $H$, so each link occurs at most once across all these paths, as required.
\end{proof}

To construct an \textsc{SCSS}, for every edge $e=uv\in E(M)$, let $h$ be the number of links on $P_e$ and select the edge $uv$ of $J$ with costs $(h,d_h(u,v))$, whose second cost is at most the weight of $P_e$.
Let $B\subseteq E(J)$ be the set of selected edges.
Since $M$ is strongly connected, $(Z,B)$ is an \textsc{SCSS} of $J$.
By the claim, summing the link counts and weights over the paths $P_e$ counts each link of $F$ at most once, so $\ell(B)\le |F|\le k$ and $w(B)\le w(F)\le t$.
\end{proof}

\Cref{lem:bounded-link-distances,lem:scss-to-augmentation,lem:terminal-reduction} together prove \Cref{thm:reduction}.

\section{Single-exponential algorithms and a polynomial kernel}
\label{sec:applications}

In this section, we prove \Cref{thm:unweighted,thm:weighted,thm:kernel-unweighted,thm:kernel}.

\subsection{Single-exponential algorithms}
\label{sec:algorithm}

Given an instance $\mathcal I=(G,L,w,k,t)$ of \sca, apply \Cref{thm:reduction} to obtain an equivalent instance $\mathcal J=(J,\ell,w,k,t)$ of \textsc{SCSS} with two costs.
Write $Z=S\cup T=V(J)$ and $n:=|Z|\le2k$.

We recall the dynamic programming of Jabal Ameli et al.~\cite[Section~3.2]{ameli2026} with edge costs $w$.
The algorithm constructs a minimum-cost strongly connected spanning subgraph through an ear decomposition, starting with a directed cycle and adding one ear at a time.
Use the following tables, with impossible entries set to infinity:
\begin{itemize}
\item For $u,v\in Z$ and $X\subseteq Z\setminus\{u,v\}$, $\mathsf P[u,v,X]$ stores the minimum cost of a $u$-to-$v$ path in $J$ whose internal vertex set is exactly $X$.
Compute these entries by Held--Karp dynamic programming~\cite{held1962}.
\item For $1\le q\le n-1$ and $X\subseteq Z$, $\mathsf D[q,X]$ stores the minimum cost of an ear decomposition on $X$ with $q$ ears, each ear after the first introducing at least one vertex.
\end{itemize}
Initialize $\mathsf D[1,X]=\min_{u\in X}\mathsf P[u,u,X\setminus\{u\}]$.
For $q\ge2$, append an ear with nonempty internal vertex set $Y\subseteq X$ by choosing its endpoints $u,v\in X\setminus Y$ and minimizing $\mathsf D[q-1,X\setminus Y]+\mathsf P[u,v,Y]$ over these choices.
For each fixed pair of endpoints, this is a $(\min,+)$ subset convolution, which can be computed in $\Ostar{2^n}$ time when entries are polynomially bounded \cite{bjorklund2007}.
Every transition appends an ear, and removing the last ear of a decomposition gives a transition that constructs it.

For unweighted \sca, apply the dynamic programming to the single-cost instance $\mathcal J=(J,d_H,k)$ from \Cref{subsec:results}.
Every decomposition considered uses at most $2n-2$ edges, and the costs $d_H$ count links on paths in $G+L$, so all finite table entries are polynomially bounded in the input size.
The dynamic programming therefore runs in $\Ostar{2^n}$ time, giving $\Ostar{4^k}$ time for unweighted \sca.
We accept if and only if $\min_{1\le q\le n-1}\mathsf D[q,Z]\le k$.

\unweightedtheorem*

\weightedtheorem*

\begin{proof}[Proof of \Cref{thm:weighted}]
For general link weights, keep all edges of the instance $J$ given by \Cref{thm:reduction}.
Add an index $0\le h\le k$ to the preceding tables: $\mathsf P[u,v,X,h]$ and $\mathsf D[q,X,h]$ store the minimum second cost of the corresponding path or ear decomposition with first cost exactly $h$.
In each transition, add the first-cost indices and second costs.
We accept if and only if $\min_{1\le q\le n-1,\,0\le h\le k}\mathsf D[q,Z,h]\le t$.

Each recurrence can be computed in $\Ostar{3^n}$ time by directly enumerating disjoint vertex sets.
Since $n\le2k$, this gives $\Ostar{9^k}$ time for \sca.
\end{proof}

\subsection{Polynomial kernels}
\label{sec:kernel}

We construct a smaller \sca\ instance that gives the same terminal graph $J$ under the reduction in \Cref{sec:terminal-reduction}.
By \Cref{thm:reduction}, this ensures that the two \sca\ instances are equivalent.
We first prove the unweighted kernel, then handle general weights by retaining more paths and applying weight reduction.

\paragraph*{Construction.}
Let $\mathcal I=(G,L,w,k,t)$ be an instance of \sca, and replace $k$ by $\min\{k,|L|\}$, since a solution uses at most $|L|$ links.
Let $S$ and $T$ be the sources and sinks of $G$, and let $Z=S\cup T$.
As in \Cref{subsec:results}, reject if $|S|>k$ or $|T|>k$, so henceforth $|Z|\le2k$.
For every distinct $u,v\in Z$ and $0\le h\le k$, compute a minimum-weight $u$-to-$v$ path $P_{uv,h}$ using at most $h$ links whenever such a path exists, by the layered shortest-path computation in \Cref{lem:bounded-link-distances}.
Its weight is $d_h(u,v)$.
Retain all these paths; for unweighted \sca, retain only one using the fewest links among the computed paths for each ordered pair of terminals.
Let $K\subseteq L$ be the union of the links on the retained paths, and let $W:=Z\cup\bigcup_{xy\in K}\{x,y\}$.
Construct a directed graph $G_W$ with vertex set $W$ and a base edge $xy$ for every distinct $x,y\in W$ such that $G$ has an $x$-to-$y$ path.
The resulting instance is $(G_W,K,w|_K,k,t)$, or $(G_W,K,k)$ in the unweighted case.

\unweightedkernel*

\begin{proof}[Proof of \Cref{thm:kernel-unweighted}]
We show that the reduction to \textsc{SCSS} applied to $(G,L,k)$ and $(G_W,K,k)$ give the same terminal graph $J$.
The equivalence then follows from \Cref{thm:reduction}.
By construction, they have the same sources and sinks.

For distinct $u,v\in Z$ and $0\le h\le k$, we show that a $u$-to-$v$ path with at most $h$ links exists in $G+L$ if and only if one exists in $G_W+K$.
For the forward direction, the retained $u$-to-$v$ path uses the minimum number of links, hence at most $h$, all in $K$.
Each maximal nonempty segment of this path in $G$ has both endpoints in $W$.
Replacing each such segment by the corresponding base edge of $G_W$ gives the required path in $G_W+K$.
Conversely, replace each base edge on a $u$-to-$v$ path in $G_W+K$ by a path in $G$.
Deleting closed subwalks gives a $u$-to-$v$ path in $G+L$ without increasing the number of links, since $K\subseteq L$.
Thus all values $d_h(u,v)$ for $0\le h\le k$ are preserved, so both inputs give the same terminal graph $J$.

There are at most $|Z|(|Z|-1)$ retained paths, each using at most $k$ links, so $|K|\le |Z|(|Z|-1)k\le4k^3$ and $|W|\le |Z|+2|K|\le2k+8k^3$.
\end{proof}

For the weighted kernel, we use Frank--Tardos weight reduction~\cite{frank1987} in the formulation of Etscheid et al.~\cite[Theorem~1]{etscheid2015}.

\begin{lemma}[Frank--Tardos weight reduction]
\label{lem:weight-reduction}
Given integers $r,N\ge1$ and a vector $\xi\in\mathbb Q^r$, one can compute in polynomial time a vector $\bar\xi\in\mathbb Z^r$ with $\|\bar\xi\|_\infty\le2^{4r^3}N^{r(r+2)}$ such that $\operatorname{sign}(\xi\cdot z)=\operatorname{sign}(\bar\xi\cdot z)$ for every $z\in\mathbb Z^r$ with $\|z\|_1\le N-1$.
\end{lemma}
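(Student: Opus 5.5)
The statement is the Frank--Tardos theorem as reformulated by Etscheid et al., so the intended route is to invoke~\cite[Theorem~1]{etscheid2015}, which derives from~\cite{frank1987}. If a self-contained argument is wanted, I would follow the Frank--Tardos construction. It has two stages: first decompose $\xi$ as a rapidly decreasing combination of short integer vectors using simultaneous Diophantine approximation, then recombine those vectors with integer powers of a large base. Throughout, let $\Lambda := \{z\in\mathbb Z^r : \|z\|_1\le N-1\}$ be the set of test vectors.

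\textbf{Decomposition step.} Scale $\xi$ so that $\|\xi\|_\infty=1$. Apply the Lenstra--Lenstra--Lov\'asz simultaneous Diophantine approximation algorithm with accuracy parameter $1/N$. In polynomial time it returns an integer $1\le q\le 2^{O(r^2)}N^{r}$ and an integer vector $p$ with $\|q\xi-p\|_\infty<1/N$. Set $\xi_1:=p$, which satisfies $\|\xi_1\|_\infty\le q$.

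The key property is that $\xi_1$ gets every sign right except possibly zero. For $z\in\Lambda$ we have $|q\xi\cdot z-\xi_1\cdot z|<\|z\|_1/N<1$. Since $\xi_1\cdot z$ is an integer, $\xi_1\cdot z\neq0$ forces $\operatorname{sign}(\xi_1\cdot z)=\operatorname{sign}(\xi\cdot z)$.

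Recurse on the remainder $\xi-\xi_1/q$, normalized. I would verify that, after normalization and up to a dimension argument, the remainder lies in a strictly smaller linear subspace than before, or equivalently that the vectors $\xi_1,\dots,\xi_i$ stay linearly independent while $\xi$ remains in the span of the current data. This bounds the number of rounds by $r$. The result is an expression $\xi=\sum_{i=1}^{s}\lambda_i\xi_i$ with $s\le r$, integer vectors $\|\xi_i\|_\infty\le 2^{O(r^2)}N^r$, and $\lambda_1\gg\lambda_2\gg\cdots>0$.

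\textbf{Recombination step.} Define $\bar\xi:=\sum_{i=1}^{s}M^{s-i}\xi_i$ with $M:=1+2rN\max_i\|\xi_i\|_\infty$. For $z\in\Lambda$, each $|\xi_i\cdot z|<M/2$, so the sign of $\bar\xi\cdot z$ equals the sign of the first nonzero $\xi_i\cdot z$.

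On the other side, the same remainder analysis as in the decomposition step shows the following. If $\xi_1\cdot z=\dots=\xi_{i-1}\cdot z=0$, then $\xi\cdot z$ equals the part carried by the $i$-th remainder. By the one-step property applied to that remainder, its sign agrees with $\xi_i\cdot z$ whenever the latter is nonzero. If all $\xi_i\cdot z$ vanish, then $\xi\cdot z=0$, because $\xi$ lies in their span. Hence $\operatorname{sign}(\bar\xi\cdot z)=\operatorname{sign}(\xi\cdot z)$ for every $z\in\Lambda$.

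\textbf{Size bound and main obstacle.} The bound is
\[
\|\bar\xi\|_\infty\le rM^{r}\le\bigl(2^{O(r^2)}N^{r+1}\bigr)^{r},
\]
which is of the required form $2^{4r^3}N^{r(r+2)}$ once the constants in the LLL guarantee are tracked. The main obstacle is the recursion invariant in the decomposition step. One must show precisely that the remainder's sign pattern on $\Lambda$ coincides with that of $\xi$ exactly on the vectors $z$ with $\xi_1\cdot z=0$, and that the dimension drops each round. I would handle this by projecting onto the orthogonal complement of $\xi_1$ inside the current span, as Frank and Tardos do. Since this is exactly their Theorem, in the paper I would simply cite it in the Etscheid et al.\ form rather than reprove it.
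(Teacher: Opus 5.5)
Your primary route, invoking Theorem~1 of Etscheid et al.\ (their formulation of the Frank--Tardos theorem), is exactly what the paper does, and the paper gives no proof beyond this citation. One correction to your optional sketch: Frank and Tardos do not project onto an orthogonal complement, and none is needed, since with $\|\xi\|_\infty=1$ the sign invariant for the remainder $q\xi-\xi_1$ on $\{z:\xi_1\cdot z=0\}$ is immediate by linearity, while termination follows from supports: a coordinate $j$ with $|\xi_j|=1$ has $q\xi_j\in\mathbb Z$, forcing $(\xi_1)_j=q\xi_j$, and zero coordinates of $\xi$ stay zero in $\xi_1$, so the remainder's support strictly shrinks and the remainder is $0$ after at most $r$ rounds.
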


\weightedkernel*

\begin{proof}[Proof of \Cref{thm:kernel}]
Use the construction above with a path $P_{uv,h}$ for every distinct $u,v\in Z$ and $0\le h\le k$ with $d_h(u,v)<\infty$.
We show that $(G,L,w,k,t)$ and $(G_W,K,w|_K,k,t)$ give the same terminal graph with both costs.
The terminal set remains $Z$, as in the unweighted case.
For distinct $u,v\in Z$ and $0\le h\le k$ with $d_h(u,v)<\infty$, compressing the base-edge segments of $P_{uv,h}$ gives a path in $G_W+K$ of weight at most $d_h(u,v)$ using at most $h$ links.
Conversely, expanding base edges and deleting closed subwalks transforms any path in $G_W+K$ into a path in $G+L$ without increasing its link count or weight.
Thus every value $d_h(u,v)$, including infinity, is preserved, so both instances give the same $J$ and are equivalent by \Cref{thm:reduction}.

There are at most $|Z|(|Z|-1)(k+1)=O(k^3)$ retained paths, each using at most $k$ links, so $|K|=O(k^4)$ and $|W|=O(k^4)$.
Relabeling $W$ consecutively bounds the graph encoding by $O(k^8\log k)$ bits.
It remains to bound the encoding length of the weights and budget.

Fix an ordering of $K$, let $r=|K|+1$, and let $\xi=((w(e))_{e\in K},t)\in\mathbb Z_{\ge0}^r$.
Apply \Cref{lem:weight-reduction} to $\xi$ with $N=k+2$, and write the resulting vector as $\bar\xi=((\bar w(e))_{e\in K},\bar t)$.
Applying sign preservation to the coordinate unit vectors shows that $\bar w$ and $\bar t$ are nonnegative.

For every $F\subseteq K$ with $|F|\le k$, let $\chi_F\in\{0,1\}^{|K|}$ be its incidence vector.
The vector $(\chi_F,-1)$ has $\ell_1$-norm at most $k+1\le N-1$, so $w(F)\le t$ if and only if $\bar w(F)\le\bar t$.
Thus $(G_W,K,\bar w,k,\bar t)$ is an equivalent instance of \sca.
By the norm bound in \Cref{lem:weight-reduction}, the vector $\bar\xi$ uses $O(r^4+r^3\log N)=O(k^{16})$ bits in total, which also bounds the complete kernel encoding.
\end{proof}


\section{Conclusion}
\label{sec:conclusion}

We obtained single-exponential algorithms and polynomial kernels for \sca\ by reducing it to \textsc{SCSS} with two edge costs.
In the undirected setting, similar reductions to \textsc{Steiner Tree} have also proved useful for approximation algorithms.
Basavaraju et al.~\cite{basavaraju2014} reduced edge-connectivity augmentation to \textsc{Steiner Tree} to obtain single-exponential algorithms.
Byrka, Grandoni, and Jabal Ameli~\cite{byrka2020} later exploited this reduction to obtain a $1.91$-approximation for the unweighted problem, the first ratio below $2$.
Subsequently, Traub and Zenklusen~\cite{traub2023} obtained a $(1.5+\varepsilon)$-approximation for the weighted problem, for every fixed $\varepsilon>0$, giving the first ratio below $2$ in this setting.
Their approach is closely related to the vertex-weighted \textsc{Steiner Tree} formulation.
The success of the \textsc{Steiner Tree} reduction suggests that our \textsc{SCSS} reduction may also be useful for approximation algorithms for \sca.

\section*{Acknowledgements}
Tomohiro Koana was supported in part by JST CREST Grant Number JPMJCR24Q2 and JST ERATO Grant Number JPMJER2301.

\section*{Declaration of generative AI use}
ChatGPT 6 Astra proposed a proof of the unweighted kernel result (\Cref{thm:kernel-unweighted}).
The authors extracted from this proof the reduction to \textsc{SCSS} with two edge costs and used it to derive the single-exponential algorithms and the polynomial kernel for general link weights.
ChatGPT 6 Astra was also used to draft the manuscript.
The authors verified and revised the manuscript and take full responsibility for it.

\begingroup
\small

\endgroup
\end{document}